\newtheorem{teo}{Theorem}[section]
\newtheorem{lema}{Lemma}[section]
\newtheorem{cor}{Corollary}[section]
\newtheorem{cons}{Construction}[section]
\providecommand{\customgenericname}{}
\title{The Tessellation Cover Number\\ of Good Tessellable Graphs}
\author{Alexandre Abreu\footnote{Universidade Federal do Rio de Janeiro, Brazil. Email: \{santiago, lfignacio, celina, franklin, posner\}@cos.ufrj.br}, Lu\'{\i}s Cunha$^*$, Celina de~Figueiredo$^*$, Luis Kowada\footnote{Universidade Federal Fluminense, Brazil. Email: luis@ic.uff.br}, \\Franklin Marquezino$^*$, Renato Portugal\footnote{Laborat\'orio Nacional de Computa\c{c}\~ao Cient\'ifica, Brazil. Email: portugal@lncc.br}, Daniel Posner$^*$
}
\begin{document}

\maketitle

\begin{abstract} 
A tessellation of a graph is a partition of its vertices into vertex disjoint cliques. 
  A tessellation cover of a graph is a set of tessellations that covers all of its edges, and the tessellation cover number, denoted by $T(G)$, is the size of a smallest tessellation cover. 
 The \textsc{$t$-tessellability} problem aims to decide whether a graph~$G$ has $T(G)\leq t$ and is $\mathcal{NP}$-complete for $t\geq 3$. 
  Since the number of edges of a maximum induced star of $G$, denoted by $is(G)$, is a lower bound on $T(G)$, we define good tessellable graphs as the graphs~$G$ such that $T(G)=is(G)$. 
  The \textsc{good tessellable recognition (gtr)} problem aims to decide whether $G$ is a good tessellable graph. 
  We show that \textsc{gtr} is $\mathcal{NP}$-complete not only if $T(G)$ is known or $is(G)$ is fixed, but also when the gap between $T(G)$ and $is(G)$ is large.
  As a byproduct, we obtain graph classes that obey the corresponding computational complexity behaviors.
  
\end{abstract}

\section{Introduction}\label{sec:intro}

It is known that there is a strong connection between the areas of graph theory and quantum computing. For instance, algebraic graph theory provides many tools to analyze the time-evolution of the continuous-time quantum walk, because its evolution operator is directly defined in terms of the graph's adjacency matrix. Recently, a new discrete-time quantum walk model has been defined by using the concept of graph tessellation cover~\cite{PSFG16}. Each tessellation in the cover is associated with a unitary operator and the full evolution operator is the matrix product of those operators. For practical applications, it is interesting to characterize graph classes that admit small-sized covers. Accordingly, we establish a new lower bound on tessellation cover that is described next. 

Throughout this paper we only consider undirected 
and simple graphs.
A \textit{tessellation} of a graph $G$ is a partition of its vertices into vertex disjoint cliques. 
A \textit{tessellation cover} of $G$ is a set of tessellations that covers all of its edges. 
The \textit{tessellation cover number} of $G$, denoted by $T(G)$, is the size of a smallest tessellation cover of $G$. 
If $G$ admits a tessellation cover of size $t$, then $G$ is \emph{$t$-tessellable}.
The \textsc{$t$-tessellability} problem aims to decide whether $G$ is $t$-tessellable. 
We disregard cliques of size one in a tessellation since they play no role in our proofs.  
The \textit{star number}, denoted by $is(G)$, is the number of edges of a maximum induced star of $G$. 
Notice that $T(G)\geq is(G)$, since any two edges of an induced star cannot be covered by a same tessellation. We say that $G$ is \emph{good tessellable} if $T(G) = is(G)$, and the \textsc{good tessellable recognition} (\textsc{gtr}) problem aims to decide whether a graph is good tessellable.

The known results about the tessellation cover number up to now were related to upper bounds on T(G) and the complexities of the $t$-\textsc{tessellability} problem~\cite{ArLatin,ArLAWCG,ArCNMAC}. 
Abreu et al.~\cite{ArLatin}
verified that $T(G) \leq \min \{\chi'(G),\chi(K(G))$\}, and they proved that \textsc{$t$-tessellability} is in $\mathcal{P}$ for quasi-threshold, diamond-free $K$-perfect graphs, and bipartite graphs.
On the other hand, they showed that the problem is $\mathcal{NP}$-complete for triangle-free graphs, unichord-free graphs, planar graphs with $\Delta \leq 6$, $(2,1)$-chordal graphs, $(1,2)$-graphs, and diamond-free graphs with diameter at most~five. 
Surprisingly, all the hardness results presented by Abreu et al. [2] for $t$-\textsc{tessellability} aim to decide whether $t=is(G)$, i.e., if the instance graph is good tessellable. Therefore, all their $\mathcal{NP}$-complete proofs for $t$-\textsc{tessellability} also hold for \textsc{gtr}. The only previous $\mathcal{NP}$-completeness result for \textsc{$t$-tessella\-bility} for non good tessellable graphs was presented by Posner et al.~\cite{ArCNMAC} for line graphs of triangle-free graphs (where $t = 3$ and $is(G)=2$).

We recently discovered that the concept of tessellation cover of graphs has been independently studied in the literature for a same problem, named as \textsc{equivalence covering} by Duchet~\cite{duchet1979representations} in 1979. 
Since the tessellation cover number $T(G)$ and the equivalence covering number $eq(G)$ are the same parameter, we highlight the common results, as follows: 
$\chi'(G)$ is an upper bound for $T(G)$~\cite{ArLatin} and for $eq(G)$~\cite{blokhuis1995equivalence}; 
if $G$ is triangle-free, then $T(G) = \chi'(G)$~\cite{ArLatin} and $eq(G) = \chi'(G)$~\cite{esperet2010covering}; 
if $G$ is triangle-free, then $3$-{\sc tessellability} of line graphs $L(G)$ is $\mathcal{NP}$-complete~\cite{ArCNMAC} and to decide whether $eq(G) \leq 3$ for the same class is $\mathcal{NP}$-complete as well~\cite{esperet2010covering}; 
if $G$ is $(2,1)$-chordal, then $t$-\textsc{tessellability} is $\mathcal{NP}$-complete for $t\geq 4$~\cite{ArLatin}, whereas \textsc{equivalence covering} is $\mathcal{NP}$-complete for $(1,1)$-graphs~\cite{blokhuis1995equivalence}.

\subsection*{Contributions}

We propose the \textsc{gtr} problem, which aims to decide whether a graph is good tessellable.
We analyze the combined behavior of the computational complexity of the problems \textsc{$t$-tessellability}, \textsc{gtr}, and \textsc{$k$-star size}. Clearly, these three problems belong to $\mathcal{NP}$.

\bigskip

\noindent
\begin{minipage}[c]{0.303\textwidth}
\begin{center}
        \begin{description}
            \item[] \underline{\textsc{$k$-star size}}
            \item[] \textbf{Instance:} Graph $G$ and integer $k$.
            \item[] \textbf{Question:} $is(G) \geq k$?
        \end{description}
\end{center}
\end{minipage}        
\hfill
\begin{minipage}[c]{0.303\textwidth}
\begin{center}
        \begin{description}
            \item[] \underline{\textsc{$t$-tessellability}}
            \item[] \textbf{Instance:} Graph $G$ and integer $t$.
            \item[] \textbf{Question:} $T(G) \leq t$?
        \end{description}
\end{center}
\end{minipage}        
\hfill
\begin{minipage}[c]{0.353\linewidth}      
            \begin{center}
        \begin{description}
            \item[] \underline{\textsc{gtr}}
            \item[] \textbf{Instance:} Graph $G$. \vspace{0.43cm}
            \item[] \textbf{Question:} $T(G)=is(G)$?
        \end{description}
            \end{center}
\end{minipage}

\bigskip

In order to highlight our results, we define graph classes using triples that specify the computational complexities of \textsc{$k$-star size}, \textsc{$t$-tessellability}, and \textsc{gtr}, summarized in Table~\ref{tab:classes}.

\begin{table}[!h]
\centering\footnotesize
\caption{Computational complexities of \textsc{$k$-star size}, \textsc{$t$-tessellability}, and \textsc{gtr} problems and examples of corresponding graph classes.}\label{tab:classes}
\begin{tabular}{ |c|c|c|c|p{1.1cm}| }

\hline
\backslashbox{Behavior}{Problem} & \textsc{$k$-star size} & \textsc{$t$-tessellability} & \textsc{gtr} & Examples \\ \hline
(a) & $\mathcal{P}$ & $\mathcal{NP}$-complete & $\mathcal{NP}$-complete & \cite{ArLatin, ArLAWCG}  \\ \hline
(b) & $\mathcal{P}$ & $\mathcal{NP}$-complete & $\mathcal{P}$ & \cite{ArCNMAC} Sec.~\ref{sub:23} \\ \hline
(c) & $\mathcal{NP}$-complete & $\mathcal{P}$ & $\mathcal{NP}$-complete & Sec.~\ref{sub:22} \\ \hline
(d) & $\mathcal{NP}$-complete & $\mathcal{P}$ & $\mathcal{P}$ & Sec.~\ref{sub:22} \\ \hline
(e) & $\mathcal{NP}$-complete & $\mathcal{NP}$-complete & $\mathcal{P}$ & Sec.~\ref{sub:23}\\ \hline
\end{tabular}
\end{table}

All graph classes for which Abreu et al.~\cite{ArLatin} presented hardness proofs for \textsc{$t$-tessellability} obey behavior~(a), since for those classes $is(G)$ is fixed and equal to $t$. 
The graphs studied by Posner et al.~\cite{ArCNMAC} obey behavior~(b), since for those graphs $is(G)=2$ and \textsc{$3$-tessellability} is $\mathcal{NP}$-complete. In Section~\ref{sub:23}, we present additional examples that obey behavior~(b) with $T(G)$ arbitrarily larger than a non fixed $is(G)$. Graphs of Construction~\ref{cons:2}~(I) in Section~\ref{sub:22} are examples that obey behavior~(c), since $T(G)$ is known but $k$-\textsc{star size} is $\mathcal{NP}$-complete for $k=T(G)$, which implies that \textsc{gtr} is $\mathcal{NP}$-complete. Graphs of Construction~\ref{cons:2}~(II) in Section~\ref{sub:22} are examples that obey behavior~(d), because \textsc{$k$-star size} is $\mathcal{NP}$-complete for $k = T(G)-1$, $T(G)$ is known, and $T(G) > is(G)$, which implies \textsc{gtr} is in $\mathcal{P}$. Graphs of Construction~\ref{cons:classe} in Section~\ref{sub:23} are examples that obey behavior~(e),  since it is known that $T(G) > is(G)$, which implies \textsc{gtr} is in $\mathcal{P}$, and we construct graphs so that \textsc{$k$-star size} and \textsc{$t$-tessellability} are $\mathcal{NP}$-complete.

Notice that there are omitted triples in Table~\ref{tab:classes}. Threshold graphs and bipartite graphs are examples of graph classes that obey the behavior $(\mathcal{P}, \mathcal{P}, \mathcal{P})$~\cite{ArLatin}. 
There are no graphs that obey the behavior $(\mathcal{P}, \mathcal{P}, \mathcal{NP}$-complete$)$, since if both \textsc{$k$-star size} and \textsc{$t$-tessellability} are in $\mathcal{P}$, so is \textsc{gtr}. 
Graph classes obtained by the union of graphs $G_1$ and $G_2$ so that $G_1$ is in a graph class that obey behavior (a) and $G_2$ is in a graph class that obey behavior (c) are examples satisfying the behavior $(\mathcal{NP}\textrm{-complete}, \mathcal{NP}\textrm{-complete}, \mathcal{NP}\textrm{-complete})$.

\subsection*{Notation and graph theory terminologies}

Given a graph $G=(V, E)$, 
the neighborhood $N(v)$ (or $N_G(v)$) of a vertex $v \in V$ of $G$ is given by $N(v)=\{u\ |\ uv \in E(G)\}$.
$\Delta(G)$ is the size of a maximum neighborhood of a vertex of $G$. 
We say that a vertex $u$ of $G$ is universal if $|N(u)| = |V(G)|-1$.
A graph is \textit{universal} if it has a universal vertex.
A \textit{clique} of $G$ is a subset of $V$ with all possible edges between its vertices.
An \textit{independent set} of $G$ is a subset of $V$ with no edge between any of its vertices.
A \textit{matching} of $G$ is a subset of edges of $E$  without a common endpoint.
A \textit{$k$-coloring} of $G$ is a partition of $V$ into $k$ independent sets.
A \textit{$k$-clique cover} of $G$ is a partition of $V$ into $k$ cliques.
A \textit{$k$-edge coloring} of $G$ is a partition of $E$ into $k$ matchings.

The parameters $\alpha(G)$, $\omega(G)$, and $\mu(G)$ are the size of a maximum independent set, the size of the maximum clique, and the size of the maximum matching of a graph $G$, respectively.
The \textit{chromatic number} $\chi(G)$ (\textit{chromatic index $\chi'(G)$}) is the minimum $k$ for which $G$ admits a $k$-coloring ($k$-edge coloring), and the \textit{clique cover number} $\theta(G)$ is the minimum $k$ for which $G$ admits a $k$-clique cover.
Note that $\theta(G) = \chi(G^c)$ and $\alpha(G) = \omega(G^c)$, where $G^c$ denotes the \textit{complement} of $G$ for which $V(G^c) = V(G)$ and $E(G^c)= \{xy | x \in V(G), y \in V(G), x \neq y\} \setminus E(G)$. 
The \textsc{$k$-colorability} (\textsc{$k$-edge colorability}) aims to decide whether a graph $G$ has $\chi(G)\leq k$ ($\chi'(G)\leq k$).
The \textsc{$k$-independent set} problem aims to decide whether a graph $G$ has $\alpha(G) \geq k$.

The \emph{line graph} $L(G)$ of a graph $G$ is the graph such that each edge of $E(G)$ is a vertex of $V(L(G))$, and two vertices of $V(L(G))$ are adjacent if their corresponding edges in $G$ have a common endpoint. 
The \emph{clique graph} $K(G)$ of a graph $G$ is the graph such that each maximal clique of $G$ is a vertex of $V(K(G))$, and two vertices of $V(K(G))$ are adjacent if their corresponding maximal cliques in $G$ have a common vertex. 
$S_k(G)$ is the graph obtained from $G$ by subdividing $k$ times each edge  $e= xy \in E(G)$, i.e., each edge $e=xy$ is replaced by a path $(x, v_1, v_2, \ldots, v_k, y)$.

The \emph{union} $G \cup H$ of two graphs $G$ and $H$ has $V(G \cup H) = V(G) \cup V(H)$ and $E(G \cup H) = E(G) \cup E(H)$.
The \emph{join} $G \vee H$ of two graphs $G$ and $H$ has $V(G \vee H) = V(G) \cup V(H)$ and $E(G \vee H) = E(G) \cup E(H) \cup \{vw\ |\ v \in V(G)$ and $w \in V(H)\}$.
An \textit{induced subgraph} $H=(V_H, E_H)$ of a graph $G=(V_G, E_G)$ has $V_H \subseteq V_G$ and $E_H=\{vw\ |\ v \in V(H), w \in V(H),$ and $vw \in E(G)\}$.
$G[S]$ is the induced subgraph of $G$ by the set of vertices $S \subseteq V(G)$.

\section{Graphs with known $T(G)$}
\label{sub:22}

We prove in this section that \textsc{gtr} is $\mathcal{NP}$-complete for graphs of Construction~\ref{cons:2}~(I), which have a known tessellation cover number. 
Using this result, we provide a graph class that obeys the behavior~(c) and another graph class that obeys behavior~(d). 
Note that if the tessellation cover number of $G$ is upper bounded by a constant, then we obtain $is(G)$ in polynomial time using a brute force algorithm.

The Mycielski graph $M_j$ for $j \geq 2$ has chromatic number $j$, maximum clique size $2$, and 
is defined as follows. $M_2 = K_2$ and for $j>2$, $M_j$ is obtained
from 
$M_{j-1}$
with vertices $v_1, \ldots, v_{|V(M_{j-1})|}$ by adding vertices $u_1, \ldots, u_{|V(M_{j-1})|}$ and one more vertex $w$. Each vertex $u_i$ is adjacent to all vertices of $N_{M_{j-1}}(v_i)~\cup~\{ w \}$.

\begin{cons}
\label{cons:1}
\emph{
Let $i$ be a non-negative integer and $G$ a graph.
The $(i, G)$-graph is obtained as follows.
Add $i$ vertices to graph $G$, and then add a universal vertex.
}
\end{cons}

\begin{cons}
\label{cons:2}
\emph{
Let $i$ be a non-negative integer and $G$ a graph with $V(G)=\{v_1, \ldots, v_n\}$.
We construct a graph $H=H_1 \cup H_2$ as follows.
Add $i$ disjoint copies $G_1, \ldots, G_i$ of $G$ to $H_1$, such that $V(G_j)=\{v^j_1, \ldots, v^j_n\}$ for $1 \leq j \leq i$, where $v^j_k$ represents the same vertex $v_k$ of $G$ for $1\leq k \leq n$.
Add to $H_1$ all possible edges between pairs of vertices that represent the same vertex of $G$.
Add a vertex $u$ to $H_1$ adjacent to all $v^j_k$ for $1\leq j \leq i$ and $1\leq k \leq n$.
Now, we consider two possibilities: either 
(I) $H_2$ is ($|V(G)|-3, M_3^c)$-graph of Construction~\ref{cons:1} or (II) $H_2$ is ($|V(G)|-3, M_4^c)$-graph of Construction~\ref{cons:1}.
Denote the universal vertex of $H_2$ by $u'$.
}
\end{cons}

Figure~\ref{fig:fixedtg} provides an example of a graph of Construction~\ref{cons:2} (I). 
In (a) we have an edge coloring of the graph $G \vee \{x\}$ with $|V(G)|$ colors.
In (b) we have the graph $H=H_1 \cup H_2$ and a tessellation cover of $H$ with $|V(G)|$ tessellations. 

\begin{figure}[!ht]
\centering
     \includegraphics[scale=0.23]{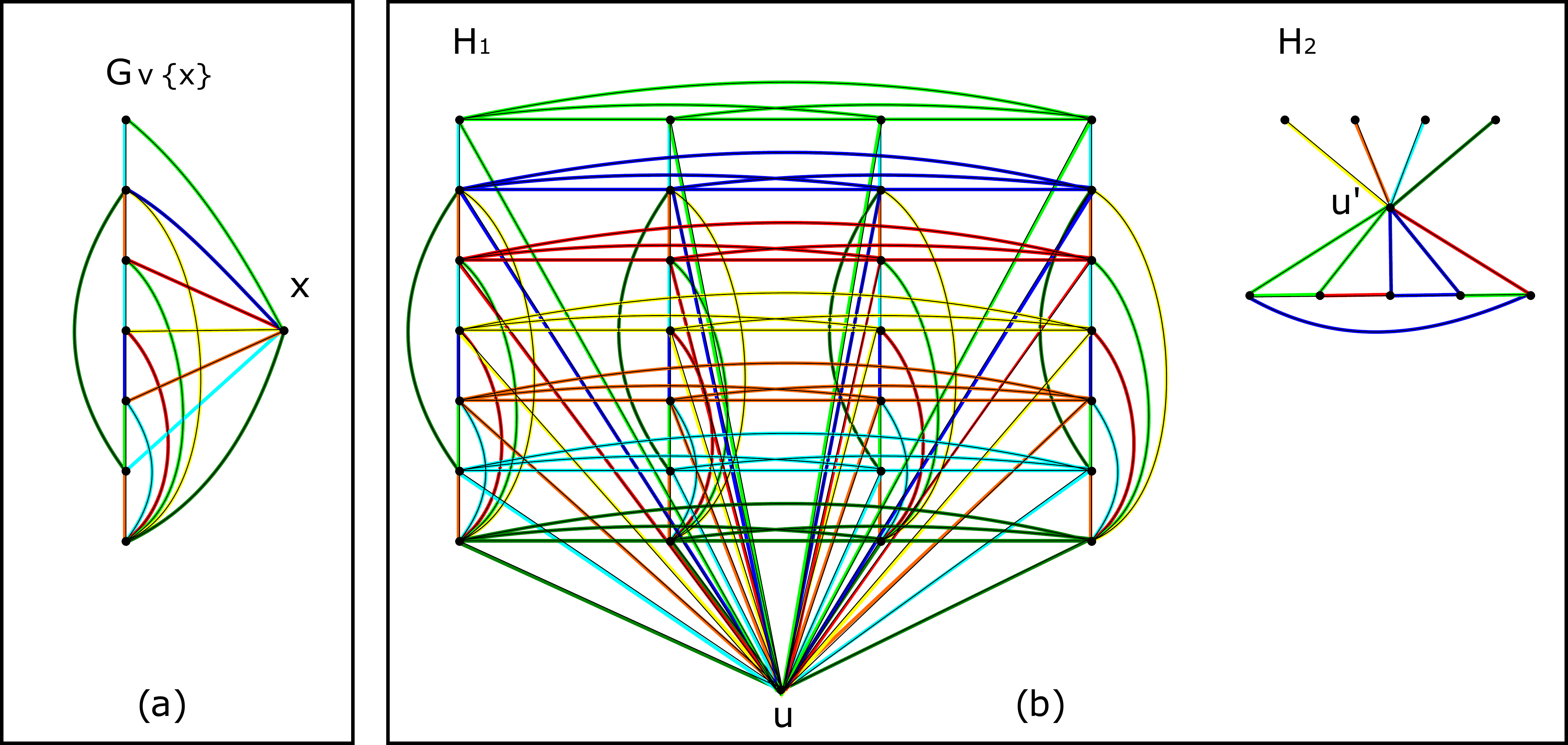}
     \caption{(a) An edge-coloring of $G\vee \{x\}$. (b) Example of a graph $H_1\cup H_2$ of Construction~\ref{cons:2} (I) obtained from graph $G$. \label{fig:fixedtg}}
\end{figure}

We now verify that the graphs of Construction~\ref{cons:2} (I) obey the behavior~(c) by 
showing that $T(H)$ is a known fraction of the number of vertices of $H$ and by deciding whether $is(H) \geq k$ is $\mathcal{NP}$-complete for $k = T(H)$.
This also implies that the graphs of Construction~\ref{cons:2} (II) obey the behavior~(d), since we have increased $T(H)$ by one unit when we have replaced $M_3^c$ by $M_4^c$ in $H_2$. In this case $T(H) > is(H)$ and \textsc{gtr} is in $\mathcal{P}$ with answer always no, whereas to decide whether $is(H) \geq k$ remains $\mathcal{NP}$-complete for $k = T(H)-1$.

\begin{teo}\label{thm:fraction}
\textsc{$k$-star size} and \textsc{gtr} are $\mathcal{NP}$-complete for graphs of Construction~\ref{cons:2}~(I).
\label{teo:npcTGfixed}
\end{teo}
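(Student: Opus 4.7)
My strategy has two stages. In the first I determine the value of $T(H)$ exactly so that it is polynomial-time computable from $H$; in the second I reduce $3$-\textsc{coloring} (restricted to a suitable subclass) to deciding $is(H)\geq T(H)$, yielding $\mathcal{NP}$-completeness for $\textsc{$k$-star size}$ at $k=T(H)$, and consequently for \textsc{gtr} since $is(H)\geq T(H)$ is equivalent to $is(H)=T(H)$ given the universal bound $is(H)\leq T(H)$.

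\textbf{Stage 1 ($T(H)=n$ where $n=|V(G)|$).} The disjoint union $H=H_1\cup H_2$ gives $T(H)=\max(T(H_1),T(H_2))$. For $T(H_2)$, the graph consists of $M_3^c=C_5$, the $n-3$ isolated vertices added in Construction~\ref{cons:1}, and the universal vertex $u'$. Each isolated $w_i$ has $u'$ as its only neighbor, so covering the edge $u'w_i$ requires a dedicated tessellation whose clique through $u'$ is exactly $\{u',w_i\}$; this consumes $n-3$ tessellations. The remaining $5$ edges from $u'$ to $V(C_5)$ must be covered by triangles $\{u',x,y\}$ with $xy\in E(C_5)$, each contributing two such edges, forcing $\lceil 5/2\rceil=3$ additional tessellations. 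Thus $T(H_2)\geq n$, matched by an explicit cover. For $T(H_1)$, the universal vertex $u$ has degree $in$ and every clique containing $u$ has size at most $i+1$ (a column $\{u,v_k^1,\ldots,v_k^i\}$), so $T(H_1)\geq n$. The upper bound follows Figure~\ref{fig:fixedtg}: an $n$-edge-coloring of $G\vee\{x\}$ (when available) produces $n$ tessellations, by lifting the edge $xv_{k(c)}$ of color $c$ to the column clique $\{u,v_{k(c)}^1,\ldots,v_{k(c)}^i\}$ and each $G$-edge $v_{k'}v_{k''}$ of color $c$ to the $i$ disjoint cliques $\{v_{k'}^j,v_{k''}^j\}$.

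\textbf{Stage 2 (reduction).} I reduce $3$-\textsc{coloring} on triangle-free graphs (known to be $\mathcal{NP}$-complete) to deciding $is(H)\geq n$. Given an instance $G_0$, preprocess by adding an isolated vertex to obtain $G$ (preserving both triangle-freeness and chromatic number and ensuring no universal vertex); let $H$ be the resulting Construction~\ref{cons:2}(I) graph with $i=3$ and $n=|V(G)|$. I claim $is(H)\geq n\iff \chi(G)\leq 3$. \emph{Forward:} a proper $3$-coloring $c$ gives the independent set $\{v_k^{c(k)}:1\leq k\leq n\}$ in $H_1-u$, since two such vertices with distinct $k$'s are adjacent only when $c(k_1)=c(k_2)$ and $v_{k_1}v_{k_2}\in E(G)$, prevented by properness; hence $u$ together with these $n$ leaves is an induced $n$-star. \emph{Converse:} since $is(H_2)=n-1<n$, the star lies in $H_1$. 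A star centered at $v_k^j$ has at most $1+\alpha(G[N_G(v_k)])\leq n$ leaves, with equality only when $G=K_{1,n-1}$---excluded by the preprocessing. Therefore the star is centered at $u$; its $n$ leaves use exactly one vertex per column (columns are cliques), and writing each leaf as $v_k^{j_k}$, the independence constraint $j_{k_1}\neq j_{k_2}$ whenever $v_{k_1}v_{k_2}\in E(G)$ is precisely a proper $3$-coloring of $G$.

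\textbf{Principal obstacle.} The technical heart is the upper bound $T(H_1)\leq n$, which needs $\chi'(G\vee\{x\})=n$ and can fail in general (e.g., $G=K_n$ with $n$ even). I resolve this via Fournier's theorem, which guarantees $G\vee\{x\}\in$ Class~$1$ when no two vertices of maximum degree are adjacent: since $\Delta(G\vee\{x\})=n$ is attained only at $x$ once $G$ lacks a universal vertex, the preprocessing (adding an isolated vertex) places the reduction inside Fournier's regime and simultaneously kills the $K_{1,n-1}$ degeneracy. The remaining verifications---constructing an explicit cover attaining $T(H_2)\leq n$, confirming $is(H_2)=n-1$ by enumerating possible star centers, and tracing the bijection between star-at-$u$ independent sets and proper $3$-colorings---are routine once the above framework is in place.
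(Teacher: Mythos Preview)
Your proposal is correct and follows essentially the same approach as the paper: compute $T(H)=n$ via the edge-coloring of $G\vee\{x\}$ (Class~1 because $x$ is the unique maximum-degree vertex) together with $T(H_2)=n$, and then reduce colorability of $G$ to $is(H)\geq n$ through the Poljak-type equivalence between proper colorings and transversal independent sets in $H_1\setminus\{u\}$. The only differences are cosmetic---the paper reduces from general $q$-\textsc{colorability} with $i=q$ and cites Poljak and a Class-1 reference directly, whereas you specialize to $i=3$ on triangle-free inputs and spell out Fournier's theorem, the Poljak bijection, and the case analysis ruling out large stars not centered at $u$ by hand.
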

\begin{proof}
Let $G$ be a graph without a universal vertex and an instance of the \textsc{$q$-colora\-bility}  problem, a well-known $\mathcal{NP}$-complete problem~\cite{ArGarey}. 
Consider the graph $H=H_1 \cup H_2$ of Construction~\ref{cons:2} (I) on $G$ with $i=q$.

We need $3$ tessellations to cover the edges of $M_3^c\vee \{u'\}$, and another $|V(G)|-3$ tessellations to cover the remaining edges of the pendant vertices,
thus, by construction, $T(H_2) = |V(G)|$.
Moreover, since $\alpha(M_3^c)=2$, then $is(H_2) = |V(G)|-1$.

We define  a tessellation cover of $H_1$ with $|V(G)|$ tessellations as follows.
Consider an optimum edge-coloring of the graph $G \vee \{x\}$.
Since $G$ has no universal vertex, $x$ is the unique universal vertex and we know that $\chi'(G \vee \{x\}) = \Delta(G \vee \{x\}) = |V(G)|$~\cite{ArClasse1}. 
Now, when we remove $x$ and the edges incident to it, this edge-coloring is a tessellation cover of $G$ with $|V(G)|$ tessellations, where for each vertex there is a distinct unused tessellation. 
We now use this tessellation cover to each copy of $G$ in $H_1$.
Next, we entirely cover each clique between vertices that represent the same vertex of $G$ and the edges incident to $u$ with the available tessellation for this clique.
Therefore, $T(H_1) \leq |V(G)|$.

We have $T(H) = \max\{T(H_1), T(H_2)\} = |V(G)| = \frac{|V(H)|-1}{q}$ and $is(H) = \max\{is(H_1), is(H_2)\}$.
Since $is(H_2) = |V(G)|-1$, $H$ is good tessellable if and only if $is(H_1) = |V(G)|$. 
Poljak~\cite{ArPoljak} proved that a graph $G$ admits a $q$-coloring if and only if $\alpha(H_1 \setminus \{u\}) = |V(G)|$.
Since $is(H) = \alpha(H_1 \setminus \{u\})$, deciding whether $H$ is good tessellable is equivalent to deciding whether $G$ is $q$-colorable.
\end{proof}

\section{Universal graphs}
\label{sub:23}

The local behavior of tessellation covers given by Lemma~\ref{lemachilocal} below motivates us to study universal graphs in this section, since the induced subgraph $G[\{v\} \cup N(v)]$ is a universal graph.
We prove that \textsc{$t$-tessellability} remains $\mathcal{NP}$-complete even if the gap between $T(G)$ and $is(G)$ is large.
Using this proof, we provide a graph class that obeys behavior~(e).

Given a $t$-tessellable graph $G$ and a vertex $v\in V(G)$, 
we consider the 
relation between $\chi(G^c[N_G(v)])$ and the cliques of those $t$ tessellations that share a vertex $v$.
Note that these cliques cover all edges incident to $v$ in any tessellation cover of $G$.
Moreover, the vertices of the neighborhood of $v$ in a same tessellation are a clique in $G$ and, therefore, they are an independent set in $G^c$. 
The independent sets in $G^c$ given by these cliques of $N_G(v)$ may share some vertices, and we can choose whichever color class they belong in such coloring of $G^c[N_G(v)]$. 
Therefore, for any vertex $v$ of~$G$, $\chi(G^c[N_G(v)]) \leq t$.
Since $is(G[v \cup N_G(v)]) = \omega(G^c[N_G(v)])$, $is(G[v \cup N_G(v)]) = \omega(G^c[N_G(v)]) \leq \chi(G^c[N_G(v)]) \leq t$, and we have the following result.

\begin{lema}
\label{lemachilocal}
If $G$ is a $t$-tessellable graph, then 
$$\max_{v \in V(G)}\{is(G[v \cup N_G(v)])\} 
\leq 
\max_{v \in V(G)}\{\chi(G^c[N_G(v)])\} \leq t.$$ 
Let $u \notin V(G)$ be a vertex. If $G \vee \{u\}$ is a $t$-tessellable graph, then $$is(G \vee \{u\}) = \alpha(G) \leq \chi(G^c) \leq t.$$
\end{lema}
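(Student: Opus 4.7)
The plan is to formalize the observations already sketched in the paragraph preceding the lemma, and then specialize them to a universal vertex for the second assertion. For the right inequality in the first claim, I would fix a tessellation cover $\{T_1,\ldots,T_t\}$ of $G$ and an arbitrary vertex $v \in V(G)$. In each tessellation $T_i$ the vertex $v$ belongs to a unique clique $C_i$ (possibly the singleton $\{v\}$), and because every edge incident to $v$ is covered by some tessellation, the sets $C_1\setminus\{v\},\ldots,C_t\setminus\{v\}$ together cover $N_G(v)$. Each $C_i\setminus\{v\}$ is a clique in $G$ and therefore an independent set in $G^c$; assigning to each vertex of $N_G(v)$ one index $i$ for which it lies in $C_i\setminus\{v\}$ yields a proper $t$-coloring of $G^c[N_G(v)]$, giving $\chi(G^c[N_G(v)]) \le t$ and, after maximizing over $v$, the desired bound.

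For the left inequality of the first claim, I would rely on the general fact $\omega(H)\le\chi(H)$ applied to $H=G^c[N_G(v)]$, together with the identity $is(G[\{v\}\cup N_G(v)]) = \alpha(G[N_G(v)]) = \omega(G^c[N_G(v)])$, which is just the observation that leaves of an induced star centered at $v$ are precisely an independent set of $N_G(v)$ in $G$, equivalently a clique of $N_G(v)$ in $G^c$. Maximizing over $v$ completes the chain.

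For the second assertion I would apply the first part to $G' = G \vee \{u\}$ at the vertex $v = u$. Universality of $u$ gives $N_{G'}(u) = V(G)$ and hence $(G')^c[N_{G'}(u)] = G^c$, so $\chi(G^c) \le t$ is immediate. The identity $is(G \vee \{u\}) = \alpha(G)$ needs a short verification: the star centered at $u$ whose leaves form a maximum independent set of $G$ has $\alpha(G)$ leaves, and any star centered at some $v \in V(G)$ has its leaves in an independent subset of $N_{G'}(v) = N_G(v) \cup \{u\}$; since $u$ is adjacent to every vertex of $N_G(v)$, such an independent set is either $\{u\}$ or a subset of $N_G(v)$ independent in $G$, so no competing star exceeds $\alpha(G)$ leaves. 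The chain $\alpha(G) = \omega(G^c) \le \chi(G^c) \le t$ then closes the argument. The main (though minor) point to be careful about is precisely this optimality of the star centered at $u$; beyond that the argument is routine once the first assertion is available.
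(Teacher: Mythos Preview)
Your proposal is correct and follows essentially the same approach as the paper: the paragraph preceding the lemma proves the right inequality by taking, for each tessellation, the clique containing $v$ and using its trace on $N_G(v)$ as a color class of $G^c[N_G(v)]$, and then obtains the left inequality via $is(G[\{v\}\cup N_G(v)]) = \omega(G^c[N_G(v)]) \le \chi(G^c[N_G(v)])$. Your treatment of the second assertion, including the explicit verification that no star centered away from $u$ can beat the one centered at $u$, is more detailed than the paper (which simply records the identity $is(G\vee\{u\})=\alpha(G)$), but the underlying argument is the same specialization of the first part to the universal vertex.
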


We start this subsection by showing that 
\begin{equation}
\label{eq:7}
\chi(G^c) \leq T(G\vee\{u\}) \leq \chi(G^c)+\Delta(G)+1.    
\end{equation}
The lower bound is given by Lemma~\ref{lemachilocal}. 
The upper bound holds because we can use $\chi(G^c)$ tessellations to cover a partition $P$ of the vertices of $G$ in cliques $p_1, p_2, \ldots, p_i$ by assigning a different tessellation for each $p_j$ for $1 \leq j \leq i$. 
Moreover, the edges between $u$ and the vertices of $p_j$ maintain the same tessellation of $p_j$. 
The remaining edges between vertices of $p_1, p_2, \ldots, p_i$ are covered by cliques of size two with the non used $\Delta(G)$+1 tessellations following an edge coloring of such edges.
Thus, there is no universal graph such that the gap between $T(G\vee\{u\})$ and $\chi(G^c)$ is larger than $\Delta(G)+1$.
In particular, if $\chi(G^c) \geq 2\Delta(G)+1$, then by Theorem~\ref{lema:chiGc} below $T(G\vee\{u\}) = \chi(G^c)$.

\begin{teo}
\label{lema:chiGc}
A graph $G \vee \{u\}$ with $\theta(G) \geq 2\Delta(G)+1$ has $T(G\vee\{u\}) = \theta(G)$.
\end{teo}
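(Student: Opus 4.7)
The plan is to prove the two inequalities $T(G\vee\{u\}) \geq \theta(G)$ and $T(G\vee\{u\}) \leq \theta(G)$ separately. The lower bound is immediate from Lemma~\ref{lemachilocal}: applying its second statement to the universal graph $G\vee\{u\}$ yields $T(G\vee\{u\}) \geq \chi(G^c) = \theta(G)$, with no hypothesis on $\Delta(G)$ needed.

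For the upper bound, I would refine the naive construction underlying~(\ref{eq:7}) so that no extra tessellations beyond the $\theta(G)$ used for the clique partition are required. Fix a minimum clique cover $\{p_1,\ldots,p_t\}$ of $G$ with $t=\theta(G)$, and for each $j$ initialize a tessellation $T_j$ whose only non-singleton clique is $p_j\cup\{u\}$. These $t$ tessellations already cover every edge internal to some $p_j$ together with every edge incident to $u$, so what remains is to cover the \emph{external} edges of $G$, meaning those whose endpoints lie in distinct parts of the clique cover.

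Placing an external edge $vw$ with $v \in p_j$ and $w \in p_k$ as a $2$-clique inside $T_\ell$ is legal exactly when (i) $\ell \notin \{j,k\}$, so that $v$ and $w$ are still singletons in $T_\ell$, and (ii) no edge previously inserted into $T_\ell$ already uses $v$ or $w$. Condition (i) is a list constraint giving each edge the palette $L_{vw} = \{1,\ldots,t\}\setminus\{j,k\}$ of size $t-2$, while condition (ii) is the usual proper-edge-colouring constraint on the subgraph $G'$ of external edges, which satisfies $\Delta(G') \leq \Delta(G)$. I would complete this step by a greedy list edge colouring of $G'$: when processing an edge $vw$, the previously coloured edges incident to $v$ or $w$ forbid at most $2(\Delta(G)-1)$ colours, so at least $(t-2)-2(\Delta(G)-1) = t-2\Delta(G) \geq 1$ choices remain in $L_{vw}$.

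The only non-routine step is this greedy count, and it is precisely where the hypothesis $\theta(G) \geq 2\Delta(G)+1$ becomes tight: the inequality is exactly what guarantees that every external edge can be absorbed into one of the $t$ original tessellations, so the final cover still has size $\theta(G)$ and matches the lower bound. I do not foresee subtleties beyond bookkeeping; the main conceptual content is recognising that the extra $\Delta(G)+1$ tessellations of~(\ref{eq:7}) are wasteful once $\theta(G)$ is large enough to double as a list of allowed colours for the remaining edges.
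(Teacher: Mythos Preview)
Your proof is correct and follows essentially the same route as the paper's: the lower bound via Lemma~\ref{lemachilocal}, then a minimum clique cover to tessellate the $p_j\cup\{u\}$'s, followed by a greedy assignment of the remaining cross-edges as $2$-cliques, where the count $t-2-2(\Delta(G)-1)=t-2\Delta(G)\geq 1$ matches the paper's ``at most $2\Delta(G)$ tessellations already incident to the endpoints.'' Your framing of the last step as list edge colouring of $G'$ is a clean way to state what the paper does informally, but there is no substantive difference in the argument.
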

\begin{proof}
Note that $\theta(G)=\chi(G^c)$.
Consider a graph $G\vee\{u\}$.
By Lemma~\ref{lemachilocal}, $T(G\vee\{u\}) \geq \chi(G^c)$.
Now we prove that $T(G\vee\{u\}) \leq \chi(G^c)$. 
Since $\chi(G^c) \geq 2\Delta(G)+1$, there is a tessellation cover of $G\vee\{u\}$ with $\chi(G^c)$ tessellations as follows.
Use the $\chi(G^c)$-coloring of $G^c$ as a guide to define a partial tessellation of the edges of $G$ with $\chi(G^c)$ tessellations in so that each color class of $G^c$ induces a clique of $G$ entirely covered by the tessellation related to this color class. 
Moreover, we extend the tessellations so that the edges $uw$ are covered by the corresponding tessellation of the color class of $w$. Now, the remaining edges in $G\vee\{u\}$ are the ones between vertices of $G$ that are not in a same color class in the coloring of $G^c$. The maximum number of tessellations incident to the endpoints of an edge $xy$ is $2\Delta(G)$ because $2$ tessellations come from the edges $ux$ and $uy$, and $2\Delta(G)-2$ come from the edges of $G$ incident to $x$ and~$y$.
Therefore, $T(G\vee \{u\}) \leq \chi(G^c)$ because it is possible to greedily assign  tessellations of cliques of size two to these edges.

\end{proof}

\begin{cor}
\label{cor:chigc}
A graph $G \vee \{u\}$ with $is(G \vee \{u\}) = \alpha(G\vee\{u\}) \geq 2\Delta(G)+1$ has $T(G\vee\{u\}) =  \chi(G^c)$. Moreover, if $H$ is a ($2\Delta(G)+1, G$)-graph of Construction~\ref{cons:1} on $G$ with $2\Delta(G)+1$ pendant vertices to $u$, then $T(H) = \theta(G) = \chi(G^c) + 2\Delta(G)+1$.
\end{cor}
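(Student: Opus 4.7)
The plan is to derive both statements of the corollary as immediate applications of Theorem~\ref{lema:chiGc}, essentially by checking that the hypothesis $\theta(G) \geq 2\Delta(G)+1$ holds in each situation and then reading off the value.

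For the first claim, I would use the chain of inequalities already recorded in Lemma~\ref{lemachilocal}, namely $\alpha(G) = \omega(G^c) \leq \chi(G^c) = \theta(G)$. The hypothesis $is(G\vee\{u\}) = \alpha(G\vee\{u\}) = \alpha(G) \geq 2\Delta(G)+1$ therefore forces $\theta(G) \geq 2\Delta(G)+1$, and Theorem~\ref{lema:chiGc} then gives $T(G\vee\{u\}) = \theta(G) = \chi(G^c)$ in one step.

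For the second claim, the idea is to fold the $2\Delta(G)+1$ pendant vertices into the non-universal part of the construction. Setting $G' := G \cup \overline{K_{2\Delta(G)+1}}$, Construction~\ref{cons:1} then describes $H$ as $G' \vee \{u\}$, so the corollary is about a universal graph of exactly the form handled by Theorem~\ref{lema:chiGc}. Appending isolated vertices does not change the maximum degree, hence $\Delta(G') = \Delta(G)$. The complement is $G'^c = G^c \vee K_{2\Delta(G)+1}$, and since the chromatic number of a join is additive, $\chi(G'^c) = \chi(G^c) + 2\Delta(G)+1$. This bound is at least $2\Delta(G)+1 = 2\Delta(G')+1$ regardless of $G$, so Theorem~\ref{lema:chiGc} applies to $G'\vee\{u\}$ and yields $T(H) = \theta(G') = \chi(G^c) + 2\Delta(G)+1$, which is the advertised value.

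There is no real obstacle here: both parts are mechanical corollaries of Theorem~\ref{lema:chiGc}. The only delicate point is recognizing $H$ as a universal graph over the correct base $G'$ and then computing $\chi(G'^c)$ via the join formula; after that, verifying the hypothesis of Theorem~\ref{lema:chiGc} is trivial.
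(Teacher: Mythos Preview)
Your proposal is correct and takes essentially the same approach as the paper: both parts are reduced to Theorem~\ref{lema:chiGc} by verifying that the clique-cover number of the non-universal part is at least $2\Delta(G)+1$, and for the second part the paper likewise observes that the added pendants leave $\Delta$ unchanged while each becomes a universal vertex in the complement, raising $\chi$ by one. Your explicit identification of $G' = G \cup \overline{K_{2\Delta(G)+1}}$ and use of the join formula $\chi(G^c \vee K_{2\Delta(G)+1}) = \chi(G^c) + 2\Delta(G)+1$ is slightly cleaner than the paper's phrasing, but the content is identical.
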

\begin{proof}
Note that if $\alpha(G\vee\{u\}) \geq 2\Delta(G)+1$, then $ \chi(G^c) \geq \omega(G^c) = \alpha(G \vee \{u\}) \geq 2\Delta(G)+1$ and, by Theorem~\ref{lema:chiGc}, $T(G\vee\{u\}) =  \chi(G^c)$.
Consider now the graph $H$.
Since each pendant vertex added to $u$ in $H$ does not modify $\Delta(G)$, and it increases $is(H)$ by one unit, $is(H)\geq 2\Delta(G)+1$.
Moreover, each of those pendant vertices is a universal vertex in $G^c$, increasing  $\chi(G^c)$ by one unit.
Thus, $T(H) = \chi(H^c) = \chi(G^c) + 2\Delta(G)+1$.
\end{proof}

\subsection*{Good tessellable universal graphs}

A universal graph $G\vee \{u\}$ is good tessellable if $T(G\vee \{u\})=is(G\vee \{u\})$.
In this case, by Lemma~\ref{lemachilocal}, $T(G\vee\{u\})=\chi(G^c) = is(G\vee \{u\})$.
Therefore, if $G \vee \{u\}$ has $T(G \vee \{u\}) > \chi(G^c)$, then it is not a good tessellable graph.
By Corollary~\ref{cor:chigc}, if $\alpha(G \vee \{u\}) \geq 2\Delta(G)+1$, then
$T(G \vee \{u\}) = \chi(G^c)$, and $G \vee \{u\}$ is good tessellable when $\chi(G^c)=\omega(G^c)=is(G \vee \{u\})$.

The computational complexity of \textsc{gtr} of a subclass of universal graphs depends on the restrictions used to define the subclass.
On the one hand, perfect graphs $G$ with $\alpha(G) \geq 2\Delta(G)+1$ can be recognized in polynomial time~\cite{ArPerfeitoo}, and the addition of a universal vertex results in a good tessellable universal graph. On the other hand, planar graphs $G$ with  $\Delta(G) \leq 4$ and $\alpha(G) \geq 2\Delta(G)+1=9$ for which to decide whether $\chi(G)=\omega(G)=3$ is $\mathcal{NP}$-complete~\cite{ArGarey}.


\subsection*{Graphs with arbitrary gap between $T(G)$ and $is(G)$}

We start by showing that the gap between $T(G)$ and $is(G)$  can be arbitrarily large for graphs $G$ composed by the join of the complement of Mycielski graphs with a vertex $u$.

Since the Mycielski graph $M_j$ is triangle-free~\cite{BkWest}, the graph $M_j^{c}$ has no independent set of size three and $is(M_j^{c} \vee \{u\}) = 2$.  Moreover, $\chi(M_j)=j$~\cite{BkWest}, and by Lemma~\ref{lemachilocal},  $T(M_j^{c} \vee \{u\}) \geq \chi((M_j^c)^c) \geq j$. Fig.~\ref{fig:mycielski} depicts an example of the Mycielski graph $M_4$ and the relation between its $4$-coloring and a minimal tessellation cover of $M_4^{c} \vee \{u\}$. Therefore, there is a graph $H=M_j^{c} \vee \{u\}$ with $is(H) = 2$ and $T(H) \geq j$ for $j \geq 3$.

\begin{figure}
\centering
     \includegraphics[scale=0.60]{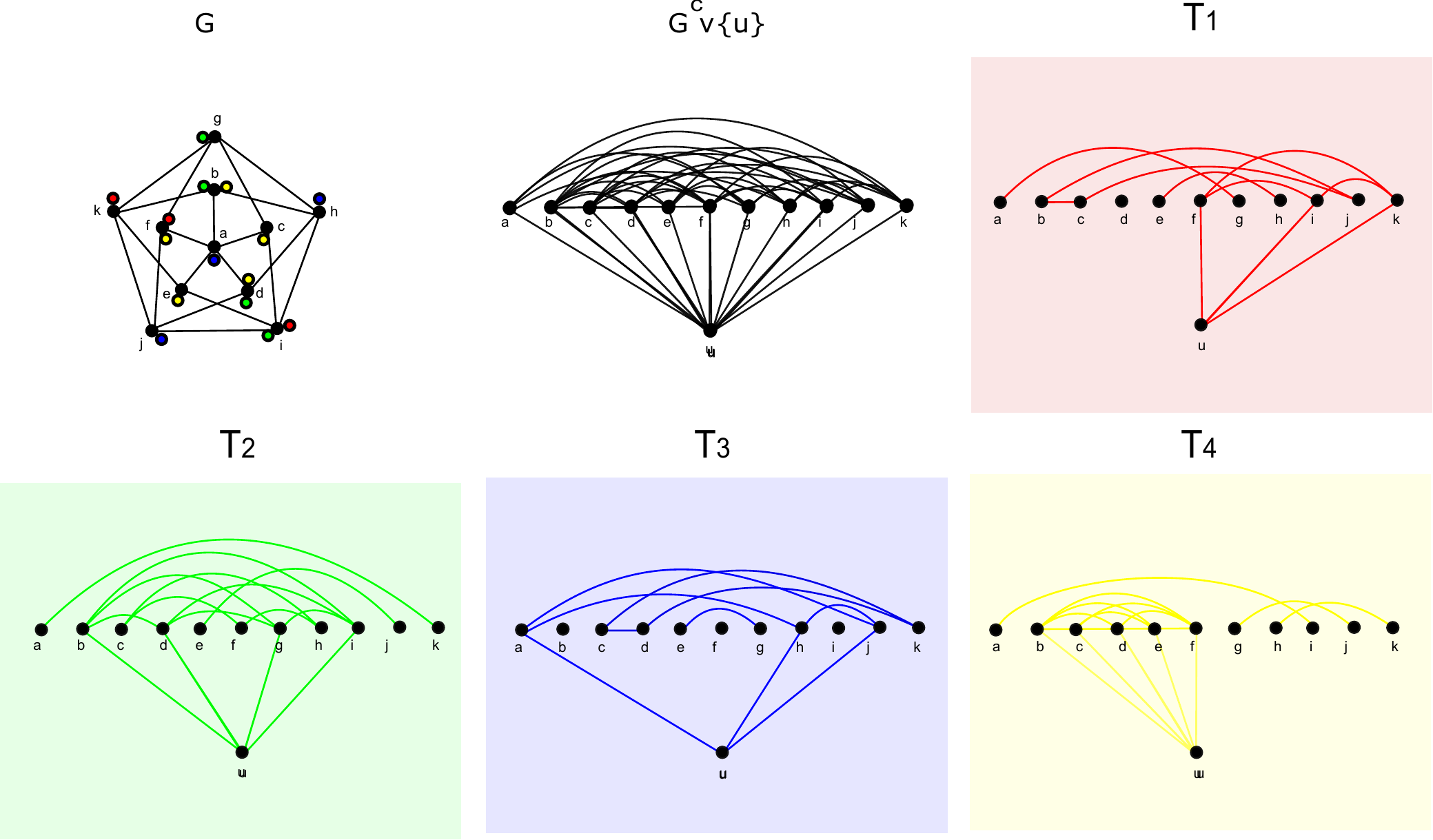}
     \caption{A Tessellation cover of $M_j^{c} \vee \{u\}$ with $4$ tessellations and possible $4$-colorings of $M_4$ guided by this tessellation cover. \label{fig:mycielski}}
\end{figure}

Now, we describe a subclass of universal graphs for which the gap between $T(G)$ and $is(G)$ is very large. 
We also show that \textsc{$k$-star size} and \textsc{$t$-tessellability} are $\mathcal{NP}$-complete for graphs of Construction~\ref{cons:classe}, for which \textsc{gtr} is in $\mathcal{P}$.

\begin{cons}
\label{cons:3}
\emph{
Let $G=(V,E)$ be a graph.
Obtain $S_2(G)$ by subdividing each edge of $G$ two times, so that each edge $vw \in E(G)$ becomes a path $v,x_1,x_2,w$, where $x_1$ and $x_2$ are new vertices. Let $L(S_2(G))$ be the line graph of $S_2(G)$. Add a universal vertex $u$ to $L(S_2(G))$, that is, consider the graph $L(S_2(G))\vee \{u\}$.
}
\end{cons}

First, we show that there is a connection between $T(H)$ of a graph $H$ of Construction~\ref{cons:3} on $G$ with the size of a maximum stable set of $G$.

\begin{teo}\label{thm:LvI}
If $G=(V, E)$ is a graph with $|E(G)|\geq 4$ and $H=(L(S_2(G)) \vee  \{u\})$ is obtained from Construction~\ref{cons:3} on $G$, then $T(H)=|V(G)| + |E(G)| - \alpha(G)$.
\label{teo:23a}
\end{teo}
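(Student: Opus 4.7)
The plan is to prove the equality by matching lower and upper bounds $T(H)=|V(G)|+|E(G)|-\alpha(G)$. I start by unpacking the structure of $L(S_2(G))$: for each $v\in V(G)$ there is a clique $Q_v=\{e_v:e\ni v\}$ of size $\deg_G(v)$ (one vertex per edge of $G$ incident to $v$), and for each $e=vw\in E(G)$ there is a middle vertex $m_e$ whose only neighbors in $L(S_2(G))$ are $e_v$ and $e_w$. Since $e_v$ and $e_w$ are non-adjacent, every clique of $L(S_2(G))$ containing $m_e$ has size at most two. Applying Lemma~\ref{lemachilocal} to the universal vertex $u$ of $H$ gives $T(H)\geq \chi(L(S_2(G))^c)=\theta(L(S_2(G)))$, so the lower bound reduces to computing the clique cover number of $L(S_2(G))$.

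For that computation I parametrize any clique partition of $L(S_2(G))$ by a map $\phi:E(G)\to V(G)\cup\{\bot\}$ with $\phi(e)\in\{v,w,\bot\}$ for $e=vw$, recording whether $m_e$ is a singleton, paired with $e_v$, or paired with $e_w$. Such a partition uses exactly $|E(G)|$ cliques for the middles plus one clique for each $v$ whose residual set $Q_v\setminus\{e_v:\phi(e)=v\}$ is non-empty, so its size is $|E(G)|+|V(G)|-|S^{\phi}|$ where $S^{\phi}=\{v\in V(G):\phi(e)=v\ \text{for every}\ e\ni v\}$. The key observation is that $S^{\phi}$ is always independent in $G$: if $v\in S^{\phi}$ and $vw\in E(G)$ then $\phi(vw)=v\neq w$ forces $w\notin S^{\phi}$. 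Hence $|S^{\phi}|\leq\alpha(G)$, which gives $\theta(L(S_2(G)))\geq |V(G)|+|E(G)|-\alpha(G)$, with equality attained when $S^{\phi}$ is a maximum independent set.

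For the upper bound I construct a tessellation cover of $H$ of matching size. Fix a maximum independent set $S\subseteq V(G)$ and, for each $e=vw\in E(G)$, set $\phi(e)$ to the endpoint of $e$ lying in $S$ when one exists and to $v$ otherwise; let $\bar{\phi}(e)$ denote the opposite endpoint, which always lies in $V(G)\setminus S$. I introduce one tessellation $T_v$ per $v\in V(G)\setminus S$ whose main clique (the one containing $u$) is $\{u\}\cup Q_v$, and one tessellation $T_e$ per $e\in E(G)$ whose main clique is $\{u,m_e,e_{\phi(e)}\}$, for a total of $|V(G)|-\alpha(G)+|E(G)|$ tessellations. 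These main cliques already cover every edge of $H$ incident to $u$, every edge inside $Q_v$ for $v\notin S$, and every edge $m_e e_{\phi(e)}$. The remaining edges are those inside $Q_v$ for $v\in S$ and those of the form $m_e e_{\bar{\phi}(e)}$; I cover them by placing the side cliques $Q_v$ (for $v\in S$) and $\{m_e,e_{\bar{\phi}(e)}\}$ (for $e\in E(G)$) into suitable existing tessellations.

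The main technical obstacle is verifying that these side cliques can be installed without conflict. The family of side cliques is pairwise vertex-disjoint, because $\bar{\phi}(e)\notin S$ forces $e_{\bar{\phi}(e)}\notin Q_v$ for every $v\in S$, and distinct edges of $G$ give distinct middle and outer vertices. To avoid clashing with a main clique, each $Q_v$ with $v\in S$ is placed inside any $T_{v'}$ with $v'\in V(G)\setminus S$, and each $\{m_e,e_{\bar{\phi}(e)}\}$ is placed inside any tessellation other than $T_e$ and $T_{\bar{\phi}(e)}$; the hypothesis $|E(G)|\geq 4$ guarantees that such tessellations exist. This yields $T(H)\leq|V(G)|+|E(G)|-\alpha(G)$, which combined with the lower bound completes the proof.
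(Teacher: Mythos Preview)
Your proof is correct and follows the same overall strategy as the paper: both obtain the lower bound $T(H)\ge\theta(L(S_2(G)))$ from Lemma~\ref{lemachilocal}, identify $\theta(L(S_2(G)))=|V(G)|+|E(G)|-\alpha(G)$, and then exhibit a tessellation cover of $H$ of that size. The differences are in execution. The paper invokes Poljak's theorem for the value of $\theta(L(S_2(G)))$, whereas you reprove it directly via the parametrization by $\phi$ and the observation that $S^{\phi}$ is independent; this makes your argument self-contained. For the upper bound, the paper starts from an \emph{arbitrary} optimal clique cover of $L(S_2(G))$ and then argues, through a case analysis on how each large clique $K_a$ is intersected by the cover, that the remaining edges can be greedily filled in. You instead fix the \emph{specific} optimal clique cover coming from a maximum independent set $S$, which lets you name the main and side cliques explicitly and reduces the verification to the disjointness checks you carry out. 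One minor imprecision: when you write that an arbitrary clique partition ``uses exactly $|E(G)|$ cliques for the middles plus one clique for each $v$ whose residual set is non-empty'', the correct statement for the lower bound is ``at least'' (a general partition may split a residual $Q_v$ into several pieces); your subsequent inequality $|S^{\phi}|\le\alpha(G)$ is unaffected.
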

\begin{proof}
We claim that $T(H) = \chi((H \setminus \{u\})^c)$. 
By Lemma~\ref{lemachilocal}, $T(H)\geq \chi((H \setminus \{u\})^{c})$.
Now, we show that $T(H)\leq \chi((H \setminus \{u\})^{c})$. 
Consider a partial tessellation cover of $H \setminus \{u\}$ with $\chi((H \setminus \{u\})^{c})$ tessellations induced by a coloring of $(H \setminus \{u\})^{c}$,  
so that cliques of $H \setminus \{u\}$ are assigned to tessellations associated to different colors 
of $(H \setminus \{u\})^{c}$. 
Since $H \setminus \{u\}=L(S_2(G))$ is the line graph of a $S_2(G)$ graph, every vertex of $(H \setminus \{u\})^{c}$ has a maximum clique of size two and another maximum clique incident to it with an arbitrary size.
Consider now a maximum clique $K_a$ of size at least three which 
is not completely covered yet.
The partial tessellation cover cannot have two cliques completely inside $K_a$ (otherwise their merge would result in a coloring of the complement graph with less than its chromatic number).
Therefore, the edges of $K_a$ are partially covered at this moment with one clique, and the remaining cliques covering the vertices of $K_a$ are the maximal cliques of size two that are incident to the vertices of $K_a$. 
Thus, if the partial tessellation cover of $K_a$ has only maximal cliques of size two given by edges incident to $K_a$, then each edge $e$ of $K_a$ has at most two already used tessellations on cliques incident to their endpoints (the ones given to these maximal cliques of size two).

Poljak~\cite{ArPoljak} proved that
$\chi(L(S_2(G))^{c}) = |V(G)| + |E(G)| - \alpha(G)$.
Since $|E(G)| \geq 4$ and $\alpha(G) \leq |V(G)|$, $|V(G)| + |E(G)| - \alpha(G) \geq 4$ and there is at least one available tessellation for each edge of $K_a$.
We claim that these available tessellations for each edge are enough to extend this partial tessellation cover to all edges of $K_a$.
First, pick an arbitrary tessellation for each edge.
Since the endpoint vertices of any collection of edges of $K_a$ on a same available tessellation do not have these tessellations incident to their endpoints, we cover the clique induced by these vertices with this tessellation.

Otherwise, $K_a$ has a clique $K_b$ in the partial tessellation cover and all the other vertices of $K_a$ must be covered by maximal cliques of size two with edges outside $K_a$. 
So, modify this partial tessellation cover assigning the tessellation of $K_b$ into all edges of $K_a$ and remove the vertices of $K_a$ from cliques of size two on this partial tessellation cover, i.e., now they are cliques of size one and $K_a$ is entirely covered by the tessellation of $K_b$.

The remaining uncovered edges of $H \setminus \{u\}$ in this partial tessellation cover are maximal cliques of size two.
Now, if an edge is uncovered and it is incident to a maximal clique of size two or more, then we need this clique to be entirely covered by a single tessellation.
Therefore, the maximum number of already used tessellations incident to the endpoints of a remaining edge is three.
Since there are $|V(G)| + |E(G)| - \alpha(G) \geq 4$ tessellations, there is always an available tessellation for these edges. 
Finally, the edges incident to $u$ are covered by the tessellations which covered the cliques of $H \setminus \{u\}$. 
Then, $T(H)\leq \chi((H \setminus \{u\})^{c})$.
\end{proof}

\begin{figure}
\centering
     \includegraphics[scale=0.4]{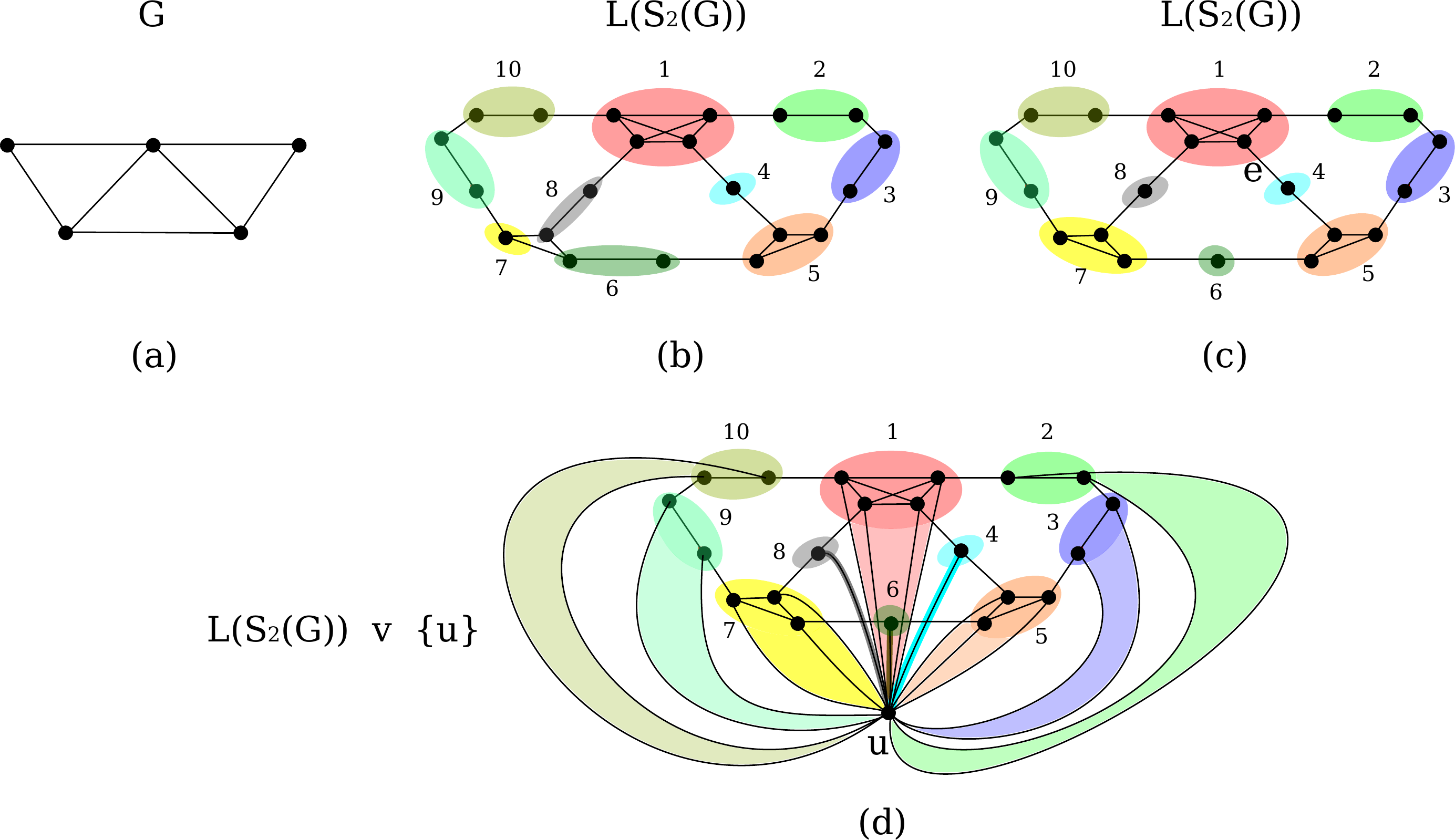}
     \caption{A tessellation cover of $H=L(S_2(G)) \vee \{u\}$ with $|V(G)|+|E(G)|-\alpha(G)$ tessellations. \label{fig:universal3}}
\end{figure}

Figure~\ref{fig:universal3} depicts the proof of Theorem~\ref{teo:23a}. In (a), we have graph $G$. 
In (b), we have a clique cover of $L(S_2(G))$.
In (c), we modify the clique cover so that the clique with label $7$ is covered by a new tessellation and at the same time we remove the vertices of the cliques of size two incident to the clique with label $7$. Now the cliques with labels $6$ and $8$ have only one vertex each. 
Finally, in (d) we extend the partial tessellation cover of $L(S_2(G))$ to include the edges incident to $u$.

Since deciding whether $\alpha(G) \geq k$ is $\mathcal{NP}$-complete~\cite{ArGarey}, by Theorem~\ref{teo:23a} we have the following result for the graphs of Construction~\ref{cons:3}.

\begin{cor}
\textsc{$t$-tessellability} is $\mathcal{NP}$-complete for universal graphs. 
\label{teo:23b}
\end{cor}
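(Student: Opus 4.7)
The plan is to give a polynomial-time reduction from \textsc{$k$-independent set}, which is $\mathcal{NP}$-complete~\cite{ArGarey}, to \textsc{$t$-tessellability} restricted to universal graphs. Given an instance $(G, k)$ of \textsc{$k$-independent set}, I would build the universal graph $H = L(S_2(G)) \vee \{u\}$ of Construction~\ref{cons:3}, and set $t = |V(G)| + |E(G)| - k$. The graph $H$ has size polynomial in $|V(G)| + |E(G)|$, since $S_2(G)$ has $|V(G)| + 2|E(G)|$ vertices and $3|E(G)|$ edges, so $L(S_2(G))$ has $3|E(G)|$ vertices (plus the universal vertex $u$) and a quadratic number of edges.

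The core of the argument is simply to invoke Theorem~\ref{thm:LvI}, which already pins down $T(H) = |V(G)| + |E(G)| - \alpha(G)$ exactly. Under this identity,
\[
T(H) \leq t \quad \Longleftrightarrow \quad |V(G)| + |E(G)| - \alpha(G) \leq |V(G)| + |E(G)| - k \quad \Longleftrightarrow \quad \alpha(G) \geq k,
\]
so $(G,k)$ is a yes-instance of \textsc{$k$-independent set} if and only if $(H,t)$ is a yes-instance of \textsc{$t$-tessellability}. Membership of \textsc{$t$-tessellability} in $\mathcal{NP}$ is immediate, as a guessed collection of $t$ tessellations can be verified in polynomial time by checking that each part is a clique and that every edge of $H$ is covered. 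Since $u$ is adjacent to every other vertex of $H$ by construction, the reduction produces instances in the class of universal graphs.

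The one technicality is the hypothesis $|E(G)| \geq 4$ in Theorem~\ref{thm:LvI}, and this is the only point that requires care. It is harmless: instances of \textsc{$k$-independent set} with $|E(G)| \leq 3$ can be solved in constant time, so we may restrict the reduction to inputs with $|E(G)| \geq 4$; alternatively, one can pad $G$ with a disjoint $K_2$ (or a few isolated edges) and shift $k$ by the known contribution to $\alpha$, preserving the equivalence. Beyond this bookkeeping I do not expect any obstacle, since Theorem~\ref{thm:LvI} has already done the substantive work of relating $T(H)$ to $\alpha(G)$.
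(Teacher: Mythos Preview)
Your proposal is correct and follows essentially the same approach as the paper: reduce \textsc{$k$-independent set} to \textsc{$t$-tessellability} via $H=L(S_2(G))\vee\{u\}$ and $t=|V(G)|+|E(G)|-k$, invoking Theorem~\ref{thm:LvI} for the equivalence. Your treatment of the $|E(G)|\geq 4$ hypothesis and the explicit size and membership-in-$\mathcal{NP}$ remarks are sound additions, but the argument is otherwise identical to the paper's.
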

 \begin{proof}
Let $G$ be an instance graph of \textsc{$k$-independent set} with $|E(G)| \geq 4$. 
 We know that deciding whether $\alpha(G) \geq k$ is $\mathcal{NP}$-complete~\cite{ArGarey}.
 Consider the graph $H$ of Construction~\ref{cons:3} on $G$ with
 $H=L(S_2(G)) \vee \{u\}$.
 By Theorem~\ref{teo:23a},  $T(H) = |E(G)| + |V(G)| - \alpha(G)$.
 Therefore, deciding whether $\alpha(G) \geq k$ is equivalent to decide whether $T(H) \leq t = |E(G)| + |V(G)| - k$.
 \end{proof}

Next, we show that there are graphs of Construction~\ref{cons:3} for which the gap between $T(G)$ and $is(G)$ is very large, whereas \textsc{$t$-tessellability} remains $\mathcal{NP}$-complete.

\begin{teo}
\label{teo:afastarnpc}
Let $G$ be a graph and $H=L(S_2(G')) \vee \{u\}$ be obtained from Construction~\ref{cons:3} on $G'$, where $G'$ is obtained from $G$ by adding $x$  universal vertices, with $x$ polynomially bounded by the size of $G$.
To decide whether $T(H)=k$ with $k \geq is(H) +c$, for $c=(O|V(G)|^d)$ and constant $d$, is $\mathcal{NP}$-complete.
\end{teo}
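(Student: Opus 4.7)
The plan is to reduce from $k$-\textsc{independent set}, which is $\mathcal{NP}$-complete by~\cite{ArGarey}. Given an instance $(G,k)$ with $|E(G)|\geq 4$, I form $G'$ by appending $x$ universal vertices to $G$, where $x$ is a polynomial in $|V(G)|$ to be calibrated below, and then set $H = L(S_2(G')) \vee \{u\}$ as in Construction~\ref{cons:3}. The $x$ new vertices are pairwise adjacent and adjacent to every vertex of $V(G)$, so whenever $G$ has an edge no maximum independent set of $G'$ can contain a universal vertex, yielding $\alpha(G')=\alpha(G)$. Theorem~\ref{thm:LvI} applied to $G'$ then gives $T(H) = |V(G')|+|E(G')|-\alpha(G)$, so with $k^\star := |V(G')|+|E(G')|-k$ we have $T(H)\leq k^\star$ (and $T(H)=k^\star$ at the threshold) if and only if $\alpha(G)\geq k$, establishing $\mathcal{NP}$-hardness of the underlying decision problem.

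The remaining work is to show that the gap condition $k^\star \geq is(H)+c$ can be enforced while keeping the construction polynomial. Since $H$ is a universal graph, Lemma~\ref{lemachilocal} together with the identity $\alpha(L(F))=\mu(F)$ gives $is(H)=\mu(S_2(G'))$. I claim $\mu(S_2(G'))=|E(G')|+\mu(G')$: for the lower bound, fix a maximum matching $M$ of $G'$ and take, for each $vw\in M$, the edges $v\,x_1^{(vw)}$ and $x_2^{(vw)}\,w$, and, for every edge $e\notin M$, the edge $x_1^{(e)}x_2^{(e)}$; for the upper bound, observe that a subdivided path contributes two edges to a matching of $S_2(G')$ only when both of its original endpoints are saturated in that path, and the collection of such doubly-active paths must itself form a matching of $G'$, since each endpoint $v\in V(G')$ can be saturated in at most one path. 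Combining this with Theorem~\ref{thm:LvI} and the Gallai identity $\alpha(G')+\tau(G')=|V(G')|$,
$$T(H)-is(H) \;=\; |V(G')|-\alpha(G')-\mu(G') \;=\; \tau(G')-\mu(G').$$

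As soon as $x\geq |V(G)|$, I can saturate every original vertex of $G'$ by pairing it with a distinct universal vertex and then pair off the remaining universal vertices in their clique, giving $\mu(G')=\lfloor |V(G')|/2\rfloor$ and
$$T(H)-is(H)\;\geq\; \frac{|V(G)|+x}{2}-\alpha(G)-1.$$
Given a prescribed gap $c = O(|V(G)|^d)$ with constant $d$, choosing $x=\Theta(|V(G)|^d)$ makes the right-hand side exceed $c$; since $x$ is polynomial in $|V(G)|$, the graph $H$ has polynomial size and the reduction runs in polynomial time, with $k^\star \geq is(H)+c$ on every produced instance. I expect the main technical point to be the matching-number identity $\mu(S_2(G'))=|E(G')|+\mu(G')$, specifically its upper bound, which rests on the endpoint-activity bookkeeping sketched above rather than on any deep structural idea.
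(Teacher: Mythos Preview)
Your proposal is correct and follows essentially the same route as the paper: both reduce from \textsc{independent set} via Theorem~\ref{thm:LvI}, establish $is(H)=\mu(S_2(G'))=|E(G')|+\mu(G')$ by the same per-path bookkeeping (at most two matching edges per subdivided path, and the ``doubly active'' paths form a matching of $G'$), and then argue that adding universal vertices leaves $\alpha$ fixed while the gap $|V(G')|-\alpha(G')-\mu(G')$ grows linearly in $x$. Your use of the Gallai identity and the explicit value $\mu(G')=\lfloor|V(G')|/2\rfloor$ once $x\ge|V(G)|$ is a bit crisper than the paper's informal increment count, but the argument is the same.
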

\begin{proof}
Consider a graph $G$ and $L(S_2(G))\vee\{u\}$ as described in Corollary~\ref{teo:23b}.
Note that $is(L(S_2(G))\vee\{u\}) = \alpha(L(S_2(G))) = \mu(S_2(G))$.
We claim that $\mu(S_2(G)) = |E(G)|+\mu(G)$.
There are three edges in $S_2(G)$ between two adjacent vertices of $G$. 
In a maximum matching of $S_2(G)$, we need to select at least one of them,
otherwise, we could include the middle edge to a maximum matching, which is a contradiction.
Moreover, if there is only one edge and it is not a middle edge, then we obtain another maximum matching by replacing this edge by the middle edge.
Clearly, we cannot choose three edges and in case we choose two edges, different from the middle edge.
The case of two edges forces that both of them are incident to vertices of $S_2(G)$ associated to vertices of $V(G)$.
Therefore, the maximum number of such selection of two edges in $S_2(G)$  is equal to the size of a maximum matching of $G$.
For each edge in a maximum matching $\mu(G)$ of $G$ we have two edges in the maximum matching in $S_2(G)$ and, for each other edge of $G$, we have one edge in the maximum matching of $S_2(G)$.
Thus, $\mu(S_2(G)) = 2\mu(G) + |E(G)| - \mu(G) = |E(G)| + \mu(G)$.

By Theorem~\ref{teo:23a}, $T(H) = |E(G)| + |V(G)| - \alpha(G)$.
The addition of universal vertices to $G$ does not modify $\alpha(G)$.
The addition of each universal vertex may add one unit to $\mu(G)$ until it reaches $|V(G)|$.
Then, we add one unit to $\mu(G)$ for each addition of two universal vertices.
In that case, we start to increase the difference between $T(H) = |E(G)|+|V(G)| - \alpha(G)$ and $is(H) = |E(G)| + \mu(G)$, since for each two universal vertices we add to $G$, we increase $T(H)$ by two units and $is(H)$ by one unit.
Therefore, we can arbitrarily enlarge the gap between $T(G)$ and $is(G)$.
And, as long as the addition of these universal vertices are polynomially bounded by the size of $G$, it holds the same polynomial transformation of Corollary~\ref{teo:23b} from \textsc{$k$-independent set} of $G'$ (obtained from $G$ by the addition of universal vertices) to \textsc{$t$-tessellability} of $L(S_2(G'))\vee\{u\}$.
\end{proof}

Finally, we show that the graphs from Construction~\ref{cons:classe} below obey behavior~(e).

\begin{cons}
\label{cons:classe}
Let $H_1$ be the graph obtained from Construction~\ref{cons:2} (I) on a given graph $G_1$ and a non-negative integer $i$.
Let $H_2$ be the graph obtained from Construction~\ref{cons:3} on the graph $G_2 \vee K_{3|V(G_1)|}$ of a given graph $G_2$.
Let $u$ and $u'$ be the two universal vertices of the two connected components of $H_1$.
Add $is(H_2)$ degree-1 vertices to $H_1$ adjacent to $u$ and $is(H_2)$ degree-1 vertices adjacent to $u'$.
Consider $H_1 \cup H_2$.
\end{cons}

\begin{teo}\label{corFigD}
\textsc{$k$-star size} and \textsc{$t$-tessellability} are $\mathcal{NP}$-complete for graphs of Construction~\ref{cons:classe}, for which \textsc{gtr} is in $\mathcal{P}$.
\end{teo}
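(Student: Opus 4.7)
The plan is to analyze the three components of $H = H_1' \cup H_2$ separately --- $H_{1a}'$ and $H_{1b}'$ (the two components of $H_1$ from Construction~\ref{cons:2}~(I), each augmented with $is(H_2)$ pendants at its universal vertex $u$ or $u'$) together with $H_2$ from Construction~\ref{cons:3} on $G' = G_2 \vee K_{3|V(G_1)|}$ --- and then combine, using that both $T$ and $is$ of a graph are the maxima over its connected components.

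For $H_2$, Theorem~\ref{teo:23a} and the matching identity $\mu(S_2(G')) = |E(G')| + \mu(G')$ extracted from Theorem~\ref{teo:afastarnpc} give $T(H_2) = |V(G')| + |E(G')| - \alpha(G_2)$ and $is(H_2) = |E(G')| + \mu(G')$; here every independent set of $G'$ lies inside $G_2$, and the join with the clique $K_{3|V(G_1)|}$ forces $G'$ to have a near-perfect matching. For $H_{1a}'$ and $H_{1b}'$, the cover of Theorem~\ref{teo:npcTGfixed} uses $|V(G_1)|$ tessellations on the pendant-free part, and each added pendant, having degree one, must lie in its own tessellation whose clique at the universal vertex is just that pendant edge; hence $T(H_{1a}'), T(H_{1b}') \leq |V(G_1)| + is(H_2)$. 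Extending the stars at $u$ and $u'$ by the pendants yields $is(H_{1b}') = is(H_2) + |V(G_1)| - 1$ and $is(H_{1a}') = is(H_2) + \alpha(H_{1a} \setminus \{u\})$; Poljak's result invoked in Theorem~\ref{teo:npcTGfixed} gives the latter as $is(H_2) + |V(G_1)|$ iff $G_1$ is $q$-colorable and $is(H_2) + |V(G_1)| - 1$ otherwise.

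The choice $K_{3|V(G_1)|}$, combined with padding $G_1$ by isolated vertices to force $|V(G_1)| > |V(G_2)|$ (this preserves the $q$-colorability status of $G_1$ and hence the reduction of Theorem~\ref{teo:npcTGfixed}), is calibrated so that $T(H_2) - is(H_2) \geq |V(G')|/2 - \alpha(G_2) > |V(G_1)|$. This single inequality does double duty: $T(H_2) > |V(G_1)| + is(H_2) \geq T(H_1')$ gives $T(H) = T(H_2)$ and so \textsc{$t$-tessellability} is $\mathcal{NP}$-complete by reduction from \textsc{$k$-independent set} on $G_2$; and $T(H_2) > is(H_2) + |V(G_1)| \geq is(H_{1a}') = is(H)$ gives $T(H) > is(H)$ unconditionally, placing \textsc{gtr} in $\mathcal{P}$ with constant answer no. Meanwhile $is(H) = is(H_{1a}')$ still depends on whether $G_1$ is $q$-colorable, yielding the $\mathcal{NP}$-completeness of \textsc{$k$-star size}. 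The main obstacle I expect is pinning down $\mu(G')$ tightly enough for the inequality $|V(G')|/2 - \alpha(G_2) > |V(G_1)|$ to hold regardless of $G_2$: the needed bound $\mu(G') \geq (|V(G')|-1)/2$ follows by exhibiting a concrete matching that pairs each $G_2$-vertex with a distinct $K$-vertex and then matches the remaining clique vertices in pairs, which combined with $\alpha(G_2) \leq |V(G_2)|$ and the padding bound $|V(G_1)| > |V(G_2)|$ delivers the required gap.
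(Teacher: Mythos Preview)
Your proposal is correct and follows essentially the same route as the paper: establish $T(H_2)-is(H_2)>|V(G_1)|$, whence $T(H)=T(H_2)$ encodes $\alpha(G_2)$ (hardness of \textsc{$t$-tessellability}), $is(H)=is(H_1')$ encodes $q$-colorability of $G_1$ via Poljak (hardness of \textsc{$k$-star size}), and $T(H)>is(H)$ unconditionally (so \textsc{gtr}$\in\mathcal{P}$ with answer ``no''). Your padding of $G_1$ to force $|V(G_1)|>|V(G_2)|$ is a useful explicit step that the paper leaves implicit in its appeal to Theorem~\ref{teo:afastarnpc}; without such a size relation the inequality $T(H_2)-is(H_2)>|V(G_1)|$ can fail.

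One small correction to your last paragraph: the bound you actually need for $T(H_2)-is(H_2)\ge |V(G')|/2-\alpha(G_2)$ is the trivial \emph{upper} bound $\mu(G')\le |V(G')|/2$, not the lower bound $\mu(G')\ge (|V(G')|-1)/2$ you go on to prove. The lower bound is true (given the padding) but plays no role in the gap estimate; at most it lets you write $is(H_2)=|E(G')|+\lfloor |V(G')|/2\rfloor$ in closed form rather than invoking a matching algorithm as the paper does. Also, strictly speaking $is(H)=\max\{is(H_{1a}'),is(H_{1b}')\}$, which equals $is(H_{1b}')=is(H_2)+|V(G_1)|-1$ when $G_1$ is not $q$-colorable; this does not affect your conclusions.
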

\begin{proof}
Let $G_1$ be an instance graph with no universal vertex of the well-known $\mathcal{NP}$-complete problem \textsc{$q$-colorability}~\cite{ArGarey}.
Let $G_2$ be an instance graph of the well-known $\mathcal{NP}$-complete problem \textsc{$p$-independent set} with $E(G_2) \geq 4$~\cite{ArGarey}.
Consider a graph $H=H_1 \cup H_2$ obtained from Construction~\ref{cons:classe} on $G_1$ and $G_2$ with $i=q$.

Since $H_2$ is obtained from Construction~\ref{cons:3} on $G_2 \vee K_{3|V(G_1)|}$, by Theorem~\ref{teo:afastarnpc}, $T(H_2) - is(H_2) > |V(G_1)|$.
By Theorem~\ref{thm:fraction}, $1 \leq is(H_1) \leq T(H_1) = |V(G_1)|$.
The parameter $is(H_2)$ can be obtained in polynomial time by applying a maximum matching algorithm~\cite{ArGarey} (see Theorem~\ref{teo:afastarnpc}).
And the addition of the degree-1 vertices to $H_1$ of Construction~\ref{cons:classe} implies that $1+is(H_2) \leq is(H_1) \leq T(H_1) = |V(G_1)|+is(H_2)$.

Therefore, $H=H_1 \cup H_2$ is a graph that obeys $is(H_2) \leq is(H_1) \leq T(H_1) \leq T(H_2)$ with $T(H)=T(H_2)$ and $is(H) = is(H_1)$.
The proof holds because \textsc{gtr} is in $\mathcal{P}$ with answer always no and both \textsc{$k$-star size} on graphs $H_1$ of Construction~\ref{cons:2} (I) (see Theorem~\ref{thm:fraction}) 
and \textsc{$t$-tessellability} on graphs $H_2$ of Construction~\ref{cons:3} (see Theorem~\ref{teo:afastarnpc}) are $\mathcal{NP}$-complete.
\end{proof}


\section{Concluding remarks}\label{sec:conc}

The concept of tessellation cover of graphs appeared in a thesis by Duchet~\cite{duchet1979representations}, and subsequently in~\cite{blokhuis1995equivalence, esperet2010covering}, as \textsc{equivalence covering}. The known results about tessellation cover number of a graph up to now were related to upper bounds of the values of $T(G)$, and the complexities of the \textsc{$t$-tessellability} problem~\cite{ArLatin}. In this work we focus on a different approach by analyzing the tessellation cover number $T(G)$ with respect to $is(G)$, one of its lower bounds, which implicitly appeared in the previous hardness proofs of~\cite{ArLatin}. 

The motivation to define the tessellation cover number comes from the analysis of the dynamics of quantum walks on a graph $G$ in the context of quantum computation~\cite{PSFG16}. Since it is advantageous to implement physically as few operators as possible in order to reduce the complexity of the quantum system, it is important to analyze the gap between $T(G)$ and $is(G)$. 

We have proposed the \textsc{good tessellable recognition} problem \textsc{(gtr)}, which aims to decide whether a graph $G$ satisfies $T(G) = is(G)$, and
we have analyzed the combined behavior of the computational complexities of the problems \textsc{$k$-star size}, \textsc{$t$-tessellability}, and \textsc{gtr}. We have defined graph classes corresponding to triples which specify the computational complexities of these problems, summarized in Table~\ref{tab:classes}. 
We have defined graph classes in Construction~\ref{cons:2}~(I) and Construction \ref{cons:2}~(II) that obey behaviors ($\mathcal{NP}$-complete, $\mathcal{P}$, $\mathcal{NP}$-complete) and ($\mathcal{NP}$-complete, $\mathcal{P}$, $\mathcal{P}$), respectively. Graphs that obey behavior~($\mathcal{NP}$-complete, $\mathcal{NP}$-complete, $\mathcal{P}$) are obtained using Construction~\ref{cons:classe}. 
We also note that there are omitted triples in Table~\ref{tab:classes}, which are either empty or easy to provide examples, as described in Section~\ref{sec:intro}. 

We are interested in the following two research topics: 
(i) The concept of good tessellable graphs can be extended to \emph{perfect tessellable graphs}, the graphs $G$ for which $T(H) = is(H)$ for any induced subgraph $H$ of $G$. 
A natural open task is to establish the characterization by forbidden induced subgraphs and a polynomial-time recognition algorithm for perfect tessellable graphs.
We conjecture that this class is exactly the \{gem, $W_4$, odd cycles\}-free graphs;
(ii) We have already established relations between $T(G)$ with other well-known graph parameters such as the chromatic number and the maximum size of a stable set. 
We are currently investigating further relations such as those between $T(G)$ with the chromatic index and the total chromatic number.

\bibliographystyle{plain}
\bibliography{refs}

\end{document}